\def\widebar{\accentset{{\cc@style\underline{\mskip10mu}}}}
\def\Widebar{\accentset{{\cc@style\underline{\mskip13mu}}}}
\newtheorem{theorem}{Theorem}
\newtheorem{remark}{Remark}
\newtheorem{definition}{Definition}
\newtheorem{proposition}{Proposition}
\newtheorem{example}{Example}
\begin{document}

\captionsetup[figure]{labelfont={ }, name={Fig.}, labelsep=period} 
\pagestyle{empty}

\title{Age of Information Upon Decisions}
\author{\authorblockN{Yunquan Dong\authorrefmark{1},
                                           Zhengchuan Chen\authorrefmark{2},
                                           Shanyun Liu\authorrefmark{3},
                                           and Pingyi Fan\authorrefmark{3} \\
\authorblockA{ \normalsize
\authorrefmark{1}School of Electronic \& Information Engineering, \\
                                 Nanjing University of Information Science \& Technology, Nanjing, China \\
\authorrefmark{2}College of Communication Engineering, Chongqing University, Chongqing, China \\
\authorrefmark{3}
                                    Department of Electronic Engineering, Tsinghua University, Beijing, China\\
yunquandong@nuist.edu.cn 
}
}
}

\maketitle
\thispagestyle{empty}

\vspace{-15mm}
\begin{abstract}
We consider an $M/M/1$ update-and-decide system where Poisson distributed decisions are made based on the received updates.
    We propose to characterize the freshness of the received updates at decision epochs with \textit{Age upon Decisions (AuD)}.
Under the first-come-first-served policy (FCFS), the closed form average AuD is derived.
    We show that the average AuD of the system is determined by the arrival rate and the service rate, and is independent of the decision rate.
Thus, merely increasing the decision rate does not improve the timeliness of decisions.
    Nevertheless, increasing the arrival rate and the service rate simultaneously can decrease the average AuD efficiently.
\end{abstract}

\begin{keywords}
Age of information, update-and-decide system, timely decisions.
\end{keywords}

\section{Introduction}
The development of modern information technology has spawned many applications with stringent delay requirements.
     In smart vehicular networks~\cite{Vnet-1-2011, Vnet-2-2011}, for example, vehicles need to share their status (e.g., position, speed, acceleration) timely to ensure safety.
For these scenarios, neither of the traditional measures like delay or throughput is suitable~\cite{Kam-AoI-2015}.
    Note that when delay is small, the received update may not be fresh if the updates come very infrequently;
when throughput is large, the received updates may also be not fresh if they undergo large queueing delay during the transmission process.

    To convey the freshness of received information, therefore, a new metric was proposed in~\cite{Yates-2012-age}, i.e., \textit{age of information (AoI)}.
Specifically, AoI is defined as the elapsed time since the generation of the latest received update~\cite{Yates-2012-age}, i.e., the age of the newest update at the receiver.
    This insightful measure of information freshness has been exhaustively studied in various queueing systems, e.g., $M/M/1, M/D/1$ and $D/M/1$~\cite{Yates-2012-age}, under several serving disciplines, e.g., first-come-first-served (FCFS)~\cite{Yates-2012-age, Dong-2018-two-way},~ last-generate-first-served (LGFS)~\cite{Sun-2016-mlt-sver}, and in multi-source~\cite{Yate-2016-multisource}, multi-class~\cite{Huang-2015-multiclass}, multi-hop~\cite{Sun-2017-multihop} scenarios.

With the following observations that
\begin{itemize}
  \item \textit{delay} quantifies the freshness of updates at the epochs when they are received;
  \item \textit{AoI} quantifies the freshness of updates at every epoch after they are received;
  \item in many update-and-decide systems, the freshness of updates are only important for some decision epochs,
\end{itemize}
we are motivated to consider a new freshness measure termed as \textit{age upon decisions (AuD)}.
    That is,
\begin{itemize}
  \item \textit{AuD} quantifies the freshness of the received updates at those decision epochs when they are used.
\end{itemize}

In particular, AuD can readily be applied to parallel computing based machine learning systems, Internet-of-Things (IoT), cognitive networks, and so on.
\begin{example}
    The AlphaGo system performs Monte Carlo tree search with 1920 CPUs and 280 GPUs in a distributed and parallel manner~\cite{Wiki-2018-alphago}.
In this kind of large-scale parallel computing systems with depth first tree searching, random polling is a simple yet effective dynamic load balancing scheme~\cite{Peter-1994-polling}.
    AuD can then be used to evaluate the utility of the system by characterizing the waiting time from the beginning of  a busy period (update arrival) of a server to the polling epoch (decision epoch) from an idle server.
\end{example}
\begin{example}
In cognitive systems, a secondary user accesses wireless channel by sensing the channel randomly~\cite{Yuan-2012-Cognitive}.
    In this case, AuD is the time elapsed from the beginning of an idle channel period to the sensing epoch (decision epoch) of the secondary user.
Thus, low AuD implies high channel utilization.
\end{example}
\begin{example}
    In large scale wireless sensor networks, IoT networks, and underwater networks, collecting information by random polling can improve system efficiency by avoiding uplink collisions~\cite{Zorzi-2012-AUV}.
        Since random polling epochs (decision epochs) can never be consistent with information generation epochs, AuD is useful to evaluate the timeliness of the information collecting process.
\end{example}

In this paper, therefore, we are interested in the age of information at decision epochs and shall apply it to an $M/M/1$ update-and-decide system.
    We assume that random decisions are made following a Poisson process.
It is surprising to observe that the average AuD of the system is independent of the rate of decisions.
    That is, making more decisions does not help to reduce the average AuD.

This paper is organized as follows.
   Section~\ref{sec:2_model} presents the system model and the definition of AuD.
 We investigate the  average AuD in an $M/M/1$ queueing system in Section~\ref{sec:3_aud} and present the obtained results via numerical simulation in Section~\ref{sec:4_simulation}.
Finally,  our work is concluded in Section~\ref{sec:5_conclusion}.

\section{System Model}\label{sec:2_model}

We consider a FCFS $M/M/1$ update-and-decide system with arrival rate $\lambda$ and service rate $\mu$, as shown in Fig. \ref{fig:net_model}.
    The arrived updates are stored in an infinite long buffer and will be served according to FCFS discipline.
We assume that the server utilization is smaller than unity, i.e., $\rho=\tfrac\lambda\mu<1$, so that the queueing process is stable.
    Based on the received updates, the receiver makes random decisions at rate $\nu$.

As shown in Fig. \ref{fig:aud}, the updates are generated at \textit{arrival epochs} $\{t_k, k=1,2,\cdots\}$ and are received at \textit{departure epochs} $t'_k$.
    The \textit{inter-arrival time} $X_k$ between neighboring updates is $X_{k}=t_k-t_{k-1}$ and the \textit{system time} that packet $k$ stays in the system is $T_{k}=t'_k-t_k$.
Note that  system time is the sum of \textit{waiting time}  $W_{k}$ and  service time $S_{k}$,  i.e., $T_{k} =W_{k}+S_{k}$.
    We denote the period between two consecutive departure epochs as \textit{inter-departure time} $Y_k=t_k'-t_{k-1}'$ and denote the period between two consecutive decision epochs as \textit{inter-decision time} $Z_j=\tau_j-\tau_{j-1}$.
In this paper, we assume that $X_k$, $S_k$, and $Z_k$ are all exponentially distributed random variables with mean $\mathbb{E}[X]=\frac1\lambda$, $\mathbb{E}[S]=\frac1\mu$, and $\mathbb{E}[Z]=\frac1\nu$, respectively.

In this paper, we investigate the freshness of the received updates at  decision epochs via \textit{age upon decision}.

\begin{definition}
     (\textit{Age upon decision-AuD}). At the $j$-th decision epoch, the index of the most recently received update is
          \begin{equation} \nonumber
                N_\text{U}(\tau_j) = \max\{ k|t_k'\leq \tau_j \},
            \end{equation}
      and the generation time of the update is
          \begin{equation} \nonumber
                U(\tau_j) = t_{N_\text{U}(\tau_j)}.
            \end{equation}

      The \textit{Age upon decision} of the update-and-decide system is then defined as the random process
    \begin{equation} \label{df:aud}
        \Delta_\text{D}(\tau_j) = \tau_j - U(\tau_j).
    \end{equation}

\end{definition}

Note that if we replace  decision epochs $\tau_j$ with arbitrary time $t$, AuD $\Delta_\text{D}(\tau_j) $  reduces to AoI $\Delta(t) $.

\begin{figure}[!t]
\centering
\includegraphics[width=2.7in]{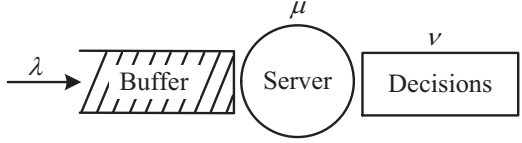}
\caption{The queueing model. } \label{fig:net_model}
\end{figure}

\vspace{3mm}
\begin{figure}[!t]
\centering
\includegraphics[width=3.1in]{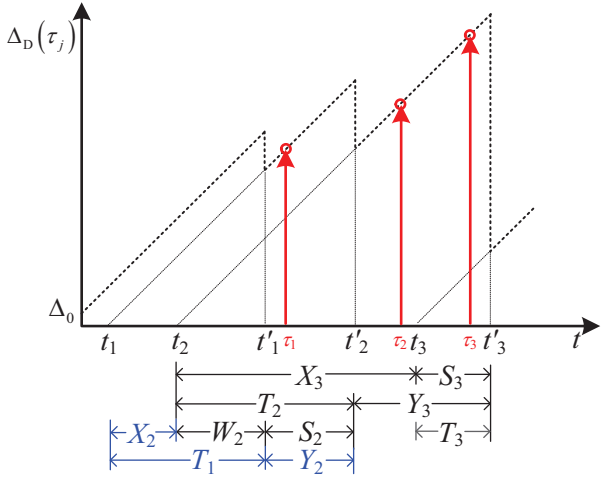}
\caption{Age upon decisions. } \label{fig:aud}
\end{figure}

\begin{example}\label{eg_1}
    Fig.~\ref{fig:aud} shows a sample path of AoI and AUD.
        Since the service of the first update is not completed until $t_1'$, the second update sees a busy server upon its arrival at $t_2$.
  The second update waits for a period of $W_2$ and starts its service immediately at the departure of the first update.
Thus, the inter-departure time $Y_2$ between the first and the second updates is equal to the service time of the second update, i.e., $Y_2=S_2$.
     This is a typical case where  $X_k<T_{k-1}$ is true and we have $Y_k=S_k$.
  On the other hand, if $X_k>T_{k-1}$ is true (e.g., $X_3>T_2$), the next update has not arrived at the departure of update $k$.
    As shown in Fig.~\ref{fig:aud}, the server will be idle for a period of $X_3-T_2$ before the third update gets served from its arrival.
  In this case, the inter-departure time is given by $Y_k=X_k+S_k-T_{k-1}$.

  During each inter-departure time, several decisions can be made based on the received update.
        For example, there are two decision epochs $\tau_2$ and $\tau_3$ (denoted by the red arrows) during $Y_3$ and the corresponding AuD are $\Delta_\text{D}(\tau_2)$ and  $\Delta_\text{D}(\tau_3)$, respectively. $\hfill{} \blacksquare$
\end{example}

For the given arrival process, the serving process, and the decision process, we are interested in the \textit{average AuD} of the system.
    Suppose there are $N_T$ decisions during a period of $T$,  the average AuD is given by
    \begin{equation} \label{df:aud}
        \widebar{\Delta}_\text{D} =\lim_{T\rightarrow\infty}\tfrac{1}{N_T} \sum_{j=1}^{N_T} \Delta_\text{D}(\tau_j),
    \end{equation}
    with $\lim_{j\rightarrow\infty}\tau_j=+\infty$.

\section{Average Age upon Decisions} \label{sec:3_aud}
    In this section, we first investigate the queueing process of the system, and then derive the average AuD closed form.

\subsection{Queueing Process} \label{subsuc:2_A}
At time $t$, we denote the number of updates in the queue as $L(t)$.
    Since server utilization $\rho=\tfrac\lambda\mu$ is smaller than unity, the queue is stable and queue length $L(t)$ has a stationary distribution $\boldsymbol{\pi}=[\pi_0, \pi_1, \pi_2,\cdots]$.
By using the equilibrium equation $\lambda \pi_i=\mu \pi_{i+1}$ and the regularization condition $\sum_{i=0}^\infty \pi_i = 1$, we have
    \begin{equation} \label{rt:pi_L}
        \pi_i =(1-\rho) \rho^i,\quad i=0,1,\cdots.
    \end{equation}

Based on this result, the probability density function (\textit{p.d.f.}) of system time $T_k$ can be given by the following proposition, which is very useful in characterizing inter-departure time $Y_k$.
\begin{proposition}\label{prop:fx_Tk}
    The \textit{p.d.f.} of system time $T_k$ is
        \begin{equation} \label{rt:pdf_Tk}
            f_\text{T}(x) =  \mu(1-\rho) e^{-\mu(1-\rho)x}, \quad x\geq0.
        \end{equation}
\end{proposition}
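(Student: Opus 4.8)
The plan is to exploit the memorylessness of the exponential service times together with the PASTA property (Poisson Arrivals See Time Averages) and to compute the distribution of the sojourn time by conditioning on the number of updates that an arriving packet finds in the system. Because arrivals form a Poisson process, an incoming update $k$ sees the queue in its stationary state, so it finds $i$ updates already present with probability $\pi_i=(1-\rho)\rho^i$, as given in \eqref{rt:pi_L}.

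Conditioned on finding $i$ updates in the system, I would argue that the system time $T_k$ is a sum of $i+1$ independent $\text{Exp}(\mu)$ random variables. Indeed, under FCFS the arriving update must wait for the \emph{residual} service of the update currently being served -- which, by the memoryless property of the exponential distribution, is again $\text{Exp}(\mu)$ -- then for the full service times of the $i-1$ updates queued ahead of it, and finally for its own service time $S_k$. Each of these $i+1$ durations is independent and $\text{Exp}(\mu)$, so $T_k$ conditioned on $\{L=i\}$ follows an Erlang distribution of shape $i+1$ and rate $\mu$, whose density is $\mu^{i+1}x^i e^{-\mu x}/i!$. The case $i=0$ is covered by the same formula, reducing to the update's own service time.

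By the law of total probability, the unconditional density is then the mixture
\begin{equation} \nonumber
  f_\text{T}(x)=\sum_{i=0}^{\infty}\pi_i\,\frac{\mu^{i+1}x^i e^{-\mu x}}{i!}
             =\mu(1-\rho)\,e^{-\mu x}\sum_{i=0}^{\infty}\frac{(\rho\mu x)^i}{i!}.
\end{equation}
The final step is to recognize the remaining sum as the power series of $e^{\rho\mu x}$, which collapses the expression to $\mu(1-\rho)e^{-\mu(1-\rho)x}$, exactly \eqref{rt:pdf_Tk}.

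The one delicate point is the conditional Erlang claim: one must justify that the residual service time seen by an arrival is genuinely $\text{Exp}(\mu)$ rather than some length-biased residual. This is precisely where memorylessness of the exponential is essential, and it is also where PASTA does the real work, guaranteeing that the mixing weights are the stationary probabilities $\pi_i$ rather than an arrival-biased distribution. Once these two facts are in place, the remainder is a routine geometric--exponential series summation.
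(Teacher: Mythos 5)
Your proposal is correct and takes essentially the same route as the paper: both condition on the queue length $L=i$ seen by the arrival (the paper implicitly, you explicitly via PASTA), use memorylessness to identify the conditional system time as $\text{Erlang}(i+1,\mu)$, and sum the geometric--exponential series. The only cosmetic difference is that you mix the Erlang densities directly, whereas the paper mixes the tail probabilities $\Pr\{T_k>x\}$ and then differentiates; both collapse to $\mu(1-\rho)e^{-\mu(1-\rho)x}$.
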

That is, $T_k$ is exponentially distributed.

\begin{proof}
    See Appendix \ref{apx:f_Tk}.
        Although this result can also be found in other references, e.g.,~\cite{Yates-2012-age}, we present a brief proof here to make it easy to follow.
\end{proof}

Since the departure time $t_k'$ can be expressed as
\begin{equation*}
    t_k' =\sum_{i=1}^k X_i +T_k,
\end{equation*}
where $X_1=t_1$,
     the inter-departure time $Y_k=t_k'-t_{k-1}'$ can be rewritten as
\begin{equation}
    Y_k = X_k +T_k - T_{k-1},\qquad k\geq2.
\end{equation}

In particular, $Y_k$ follows the same distribution as inter-arrival time $X_k$, as shown in the following proposition.
\begin{proposition}\label{prop:mgf_Yk}
    Inter-departure time $Y_k$ is an exponentially distributed random number with rate $\lambda$ and \textit{p.d.f.}
        \begin{equation}\label{rt:mgf_Yk}
                f_\text{Y}(x)=\lambda e^{-\lambda x}.
        \end{equation}
\end{proposition}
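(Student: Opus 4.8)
The plan is to show that the moment generating function (MGF) of $Y_k$ coincides with that of an $\mathrm{Exp}(\lambda)$ random variable, namely $\mathbb{E}[e^{sY_k}]=\tfrac{\lambda}{\lambda-s}$ for $s<\lambda$, and then invoke uniqueness of the MGF to conclude the stated p.d.f. The representation $Y_k=X_k+T_k-T_{k-1}$ is not directly amenable to a transform computation, because $T_k$ depends on both $X_k$ and $T_{k-1}$, so naively factoring a sum of ``independent'' terms fails. Instead I would exploit the sample-path dichotomy already visible in Example~\ref{eg_1}: if update $k$ finds the server busy ($X_k<T_{k-1}$) it departs exactly one service time after its predecessor, giving $Y_k=S_k$; if it finds the server idle ($X_k>T_{k-1}$) the server rests for $X_k-T_{k-1}$ and then serves update $k$, giving $Y_k=X_k+S_k-T_{k-1}$. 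Both branches are summarized by $Y_k=S_k+(X_k-T_{k-1})^+$.

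First I would record the independence facts that make the two branches tractable: the service time $S_k$ is independent of the arrival gap $X_k$ and of the queueing history, hence of $T_{k-1}$; and by the memorylessness of the Poisson arrivals, $X_k$ is independent of $T_{k-1}$. Using Proposition~\ref{prop:fx_Tk}, $T_{k-1}$ is $\mathrm{Exp}(\mu(1-\rho))$, and it is convenient to write $\beta=\mu(1-\rho)=\mu-\lambda$. I would then split the MGF as $\mathbb{E}[e^{sY_k}]=\mathbb{E}[e^{sS_k}\mathbf{1}\{X_k<T_{k-1}\}]+\mathbb{E}[e^{s(X_k+S_k-T_{k-1})}\mathbf{1}\{X_k>T_{k-1}\}]$. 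In each term the factor $S_k$ pulls out as $\mathbb{E}[e^{sS_k}]=\mu/(\mu-s)$, leaving expectations in $X_k$ and $T_{k-1}$ only: the first collapses to $P(X_k<T_{k-1})=\lambda/(\lambda+\beta)=\rho$, while the second is a double exponential integral over $\{x>t\}$ that evaluates to $\lambda\beta/[\mu(\lambda-s)]$.

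Collecting the two contributions gives $\mathbb{E}[e^{sY_k}]=\tfrac{\lambda}{\mu-s}+\tfrac{\lambda\beta}{(\mu-s)(\lambda-s)}$, and the identity $\lambda+\beta=\mu$ makes the bracketed factor telescope to $(\mu-s)/(\lambda-s)$, so the whole expression collapses to $\lambda/(\lambda-s)$; inverting this transform yields $f_\text{Y}(x)=\lambda e^{-\lambda x}$. I expect the main obstacle to be justifying the probabilistic independence invoked in the splitting, specifically that $X_k\perp T_{k-1}$ and $S_k\perp(X_k,T_{k-1})$, and ensuring we are genuinely in the stationary regime so that $T_{k-1}$ carries the distribution of Proposition~\ref{prop:fx_Tk} uniformly in $k$. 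Care is also needed so that the regions $\{X_k<T_{k-1}\}$ and $\{X_k>T_{k-1}\}$ partition the sample space with the matching expression for $Y_k$ on each; once the dependency between $T_k$ and $T_{k-1}$ is dispatched through this case split, the remaining work is routine integration. As a consistency check, the result is precisely Burke's theorem, which asserts that the departure process of a stationary $M/M/1$ queue is again Poisson with rate $\lambda$.
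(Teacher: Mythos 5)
Your proposal is correct and takes essentially the same route as the paper's proof: the identical case split on $\{X_k<T_{k-1}\}$ versus $\{X_k>T_{k-1}\}$ (your compact form $Y_k=S_k+(X_k-T_{k-1})^{+}$), the same MGF computation factoring out $\mathbb{E}[e^{sS_k}]=\mu/(\mu-s)$ and using Proposition~\ref{prop:fx_Tk} to get $\Pr\{X_k<T_{k-1}\}=\rho$ and the double integral $\lambda\beta/[\mu(\lambda-s)]$, collapsing to $\lambda/(\lambda-s)$. Your explicit attention to the independence of $X_k$, $S_k$, and $T_{k-1}$ and to stationarity merely spells out what the paper uses implicitly.
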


\begin{proof}
    See Appendix \ref{apx:mgf_Yk}.
\end{proof}

By considering the number of decision epochs during each inter-departure time, average AuD of the system can be obtained, as shown in the following theorem.
\begin{theorem}\label{thm:thm_1_aud}
    In an $M/M/1$ update-and-decide system with arrival rate $\lambda$, service rate $\mu$, and Poisson decisions at rate $\nu$,  the average AuD of the system is independent of decision rate $\nu$.
        Specifically, the average AuD is given by
    \begin{equation}\label{rt:thm_aud}
        \widebar{\Delta}_\text{D} =\frac1\mu\left(1+ \frac1\rho +\frac{\rho^2}{1-\rho}\right).
    \end{equation}
\end{theorem}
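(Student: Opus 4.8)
The plan is to evaluate the average AuD by a renewal--reward (ergodic) argument over inter-departure cycles, treating the Poisson decision stream as a sampling process. Since the queue is stable ($\rho<1$), the chain of system times is positive recurrent and ergodic, so the time average in \eqref{df:aud} equals a ratio of stationary per-cycle expectations. I would take the $k$-th cycle to be the inter-departure interval $[t_{k-1}',t_k')$ of length $Y_k$, on which the freshest received update is update $k-1$; hence for any decision epoch $\tau_j$ in this cycle, $\Delta_\text{D}(\tau_j)=\tau_j-t_{k-1}=T_{k-1}+(\tau_j-t_{k-1}')$, i.e.\ the AuD is the age $T_{k-1}$ at the left endpoint plus the time elapsed into the cycle.

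First I would compute the reward of one cycle. Conditioned on $Y_k$ and $T_{k-1}$, the number of decisions is Poisson with mean $\nu Y_k$ and the decision instants are a uniform sample on the cycle; hence the conditional expected count is $\nu Y_k$ and the conditional expected sum of AuD is $\nu Y_k T_{k-1}+\tfrac{\nu}{2}Y_k^2$. Taking stationary expectations and forming the renewal--reward ratio, the common factor $\nu$ cancels and
\begin{equation}\label{eq:ratio}
\widebar{\Delta}_\text{D}=\frac{\mathbb{E}[T_{k-1}Y_k]+\tfrac12\mathbb{E}[Y^2]}{\mathbb{E}[Y]}.
\end{equation}
This cancellation is precisely the mechanism behind the claimed independence of $\nu$, so I would flag it as the conceptual core of the theorem; everything afterward is the evaluation of the queue-only moments on the right-hand side.

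By Proposition~\ref{prop:mgf_Yk}, $Y$ is exponential with rate $\lambda$, so $\mathbb{E}[Y]=1/\lambda$ and $\mathbb{E}[Y^2]=2/\lambda^2$. The only remaining quantity is the cross-moment $\mathbb{E}[T_{k-1}Y_k]$, which I would reduce via $Y_k=X_k+T_k-T_{k-1}$ to
\[
\mathbb{E}[T_{k-1}Y_k]=\mathbb{E}[X_k T_{k-1}]+\mathbb{E}[T_{k-1}T_k]-\mathbb{E}[T_{k-1}^2].
\]
By memorylessness of the Poisson arrivals, $X_k$ is independent of $T_{k-1}$, so $\mathbb{E}[X_k T_{k-1}]=\mathbb{E}[X]\,\mathbb{E}[T]$, and by Proposition~\ref{prop:fx_Tk} the moments of the exponential $T$ (rate $\mu(1-\rho)$) are immediate.

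The main obstacle is the correlation $\mathbb{E}[T_{k-1}T_k]$ of consecutive system times, which are dependent. Here I would invoke the Lindley recursion $T_k=(T_{k-1}-X_k)^{+}+S_k$, with $X_k$ and the service time $S_k$ independent of $T_{k-1}$. Conditioning on $T_{k-1}=t$ and integrating out $X_k\sim\mathrm{Exp}(\lambda)$ gives $\mathbb{E}[(t-X_k)^{+}]=t-\tfrac1\lambda(1-e^{-\lambda t})$, so that $\mathbb{E}[T_{k-1}T_k]=\mathbb{E}[T^2]-\tfrac1\lambda\mathbb{E}[T]+\tfrac1\lambda\mathbb{E}[T e^{-\lambda T}]+\tfrac1\mu\mathbb{E}[T]$. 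The term $\mathbb{E}[T e^{-\lambda T}]$ is where the algebra is delicate, but it collapses once one notices the identity $\lambda+\mu(1-\rho)=\mu$, which yields $\mathbb{E}[T e^{-\lambda T}]=(1-\rho)/\mu$. Substituting these moments back into \eqref{eq:ratio} and using $\lambda=\rho\mu$, a short simplification gives $\widebar{\Delta}_\text{D}=\tfrac1\mu\big((1-\rho)+\rho/(1-\rho)+1/\rho\big)$, which rearranges to the claimed form \eqref{rt:thm_aud}.
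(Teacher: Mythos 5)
Your proposal is correct, and its skeleton is the same as the paper's: a renewal--reward/ergodic ratio over inter-departure cycles, the Poisson-count-plus-uniform-order-statistics computation giving a per-cycle expected reward $\nu\bigl(\tfrac12 Y_k^2 + T_{k-1}Y_k\bigr)$, cancellation of $\nu$ against the expected number of decisions per cycle (which is indeed the mechanism behind $\nu$-independence, as you flag), and the same master formula $\widebar{\Delta}_\text{D}=\bigl(\mathbb{E}[Y^2]+2\mathbb{E}[T_{k-1}Y_k]\bigr)/\bigl(2\mathbb{E}[Y]\bigr)$ as in \eqref{apx:rt_aud}, with $\mathbb{E}[Y]=1/\lambda$ and $\mathbb{E}[Y^2]=2/\lambda^2$ from Proposition~\ref{prop:mgf_Yk}. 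Where you genuinely diverge is the evaluation of the cross-moment $\mathbb{E}[T_{k-1}Y_k]$: the paper splits on the events $\{X_k<T_{k-1}\}$ (probability $\rho$) and $\{X_k>T_{k-1}\}$, uses the two-case expression $Y_k=S_k$ or $Y_k=X_k+S_k-T_{k-1}$ from Example~\ref{eg_1}, and evaluates $\mathbb{E}[T_{k-1}X_k-T_{k-1}^2\,|\,T_{k-1}\leq X_k]$ by a double integral; you instead write $Y_k=X_k+T_k-T_{k-1}$ and attack the serial correlation $\mathbb{E}[T_{k-1}T_k]$ via the Lindley recursion $T_k=(T_{k-1}-X_k)^{+}+S_k$, reducing everything to exponential moments plus the tilted moment $\mathbb{E}[Te^{-\lambda T}]=(1-\rho)/\mu$. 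I checked your numbers: your route gives $\mathbb{E}[T_{k-1}Y_k]=\tfrac{1}{\mu^2(1-\rho)}+\tfrac{1-\rho}{\mu^2\rho}$, exactly the paper's value, and the final simplification $\tfrac1\mu\bigl((1-\rho)+\tfrac{\rho}{1-\rho}+\tfrac1\rho\bigr)=\tfrac1\mu\bigl(1+\tfrac1\rho+\tfrac{\rho^2}{1-\rho}\bigr)$ is valid. The paper's case split is somewhat more direct here since the two-case form of $Y_k$ is already on the table; your Lindley route makes the dependence structure of consecutive system times explicit and would port more easily to models where the case-split density calculation gets messy, at the cost of the extra exponential-tilting computation. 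A further small point in your favor: by conditioning on both $Y_k$ and $T_{k-1}$ before averaging, you avoid the paper's slightly sloppy intermediate line $\mathbb{E}[\Delta_{\text{D}k}|Y_k=y]=\nu(\tfrac{y^2}{2}+y\,\mathbb{E}[T_{k-1}])$, which momentarily treats $T_{k-1}$ as independent of $Y_k$ even though the dependence is restored in the final cross-moment.
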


\begin{proof}
    See Appendix \ref{apx:prf_1_aud}.
\end{proof}

\begin{remark}
    From Theorem \ref{thm:thm_1_aud}, we have the following observations.
    \begin{enumerate} []
      \item For our model, the average AuD depends only on arrival rate $\lambda$ and service rate $\mu$, and is independent of decision rate $\nu$.
                        This means that making decisions more frequently does not improve the timeliness of decisions.
                This is because when decision rate is increased, although there will be more decision epochs being closer to the departure time of the newest update, there will also be more decision epochs being farther from the departure time.
                    In the statistical sense, therefore, average AuD does not change with the frequency of decisions.
      \item  If the arrival rate is small, although the waiting time and the system time of updates are small, the average AUD is large since the inter-arrival time is large.
                        On the other hand, although the inter-arrival time is small when the arrival rate is large, the system time would be large due to queueing delays, resulting to large average AuD.
                  To minimize the average AuD, therefore, the arrival rate should neither be too small nor too large.
      \item For a given service rate $\mu$, the optimal arrival rate minimizing $\rho^*$ would be close to $0.5$, i.e., is $\lambda^*\thickapprox\tfrac \mu2$.
      \item For a given arrival rate $\lambda$, the average AuD decreases with service rate $\mu$ and approaches zero as $\mu$ goes to infinity, i.e, updates are transmitted with zero service time.
    \end{enumerate}

Moreover, the conclusion that average AuD is independent of decision rate also applies to general $G/G/1$ queues if decisions are made uniformly in the statistical sense, e.g. periodically or with exponential inter-decision times.
    Thus, we should be more focused on scheduling the update arrival/service process other than the decision process.

\begin{figure}[htp]   

\hspace{-6 mm}
    \begin{tabular}{cc}
    \subfigure[Average AuD versus arrival rate $\lambda$]
    {
    \begin{minipage}[t]{0.5\textwidth}
    \centering
    {\includegraphics[width = 3.5in] {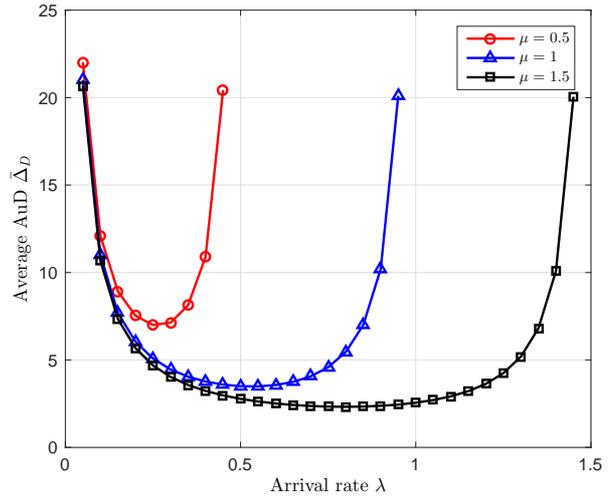} \label{fig:aud_lambda}}
    \end{minipage}
    }\\

    \subfigure[Average AuD versus service rate $\mu$]
    {
    \begin{minipage}[t]{0.5\textwidth}
    \centering
    {\includegraphics[width = 3.5in] {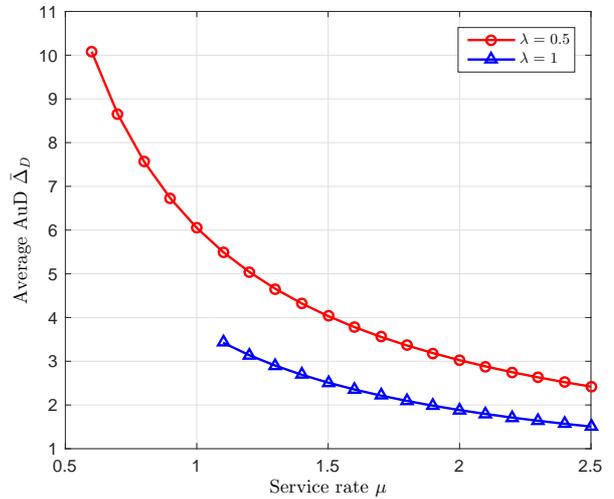} \label{fig:aud_mu}}
    \end{minipage}
    }
%

    \end{tabular}

\caption{Average AuD. } \label{fig:aud_sim}
\end{figure}

\end{remark}

\section{Numerical Results}\label{sec:4_simulation}
In this section, we investigate the average AuD of the $M/M/1$ system via numerical results.
    First, we set the service rate to be a constant (e.g., $\mu=0.5, \mu=1, \mu=1.5$) and investigate how average AuD changes with arrival rate $\lambda$.
As shown in Fig. \ref{fig:aud_lambda}, the average AuD is large when $\lambda$ is either very small or very large.
    To be specific, when $\lambda$ is small, average AuD is large because the waiting time for the arrival of a new update is large.
When $\lambda$ is large, the queueing delay of updates is large due to the limited service capability of the server.
    In this situation, the received updates will be outdated at the decision epochs.
To minimize AuD, therefore, we should try to increase the service rate and set the arrival rate to be a half of service rate, i.e., $\lambda=\mu/2$.

Fig. \ref{fig:aud_mu} presents how average AuD changes with service rate for a given arrival rate.
    As is shown, average AuD is monotonically decreasing with service rate $\mu$.
Moreover,  given a large $\mu$, AuD is smaller if arrive rate $\lambda$ is larger.
    This means that the AuD performance is better when the arrival rate and the service rate grow larger at the same time.

\begin{figure}[!t]
\centering
\includegraphics[width=3.5in]{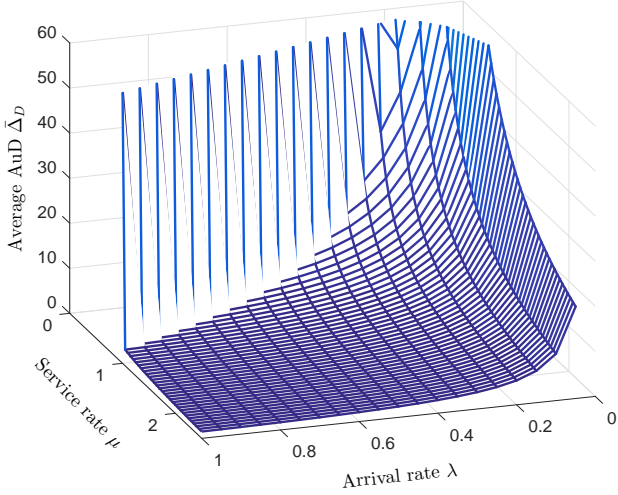}
\caption{Average AuD versus $\lambda$ and $\mu$. } \label{fig:aud_3d}
\end{figure}

\begin{figure}[!t]
\centering
\includegraphics[width=3.5in]{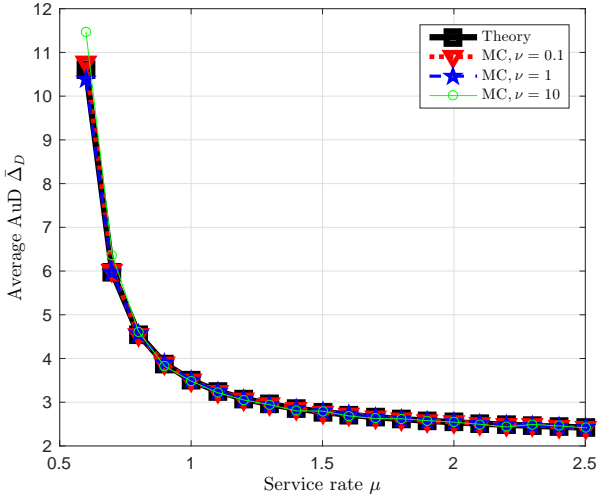}
\caption{Average AuD versus $\mu$ via Monte Carlo simulation. } \label{fig:aud_mu_mc}
\end{figure}

We further plot the variations of average AuD with arrival rate and service rate in Fig. \ref{fig:aud_3d}.
    It is clear that average AuD is small when $\lambda$ and $\mu$ are large.
When they are both small, it is seen that $\mu$ brings more effect on AuD performance.
    In particular, average AuD goes to infinity much faster if $\mu$ is reduced to zero than $\lambda$ is reduced to zero.

Fig. \ref{fig:aud_mu_mc} presents average AuD via Monte Carlo simulations.
    We set arrival rate to $\lambda=0.5$ and consider three decision rates, i.e.,  $\nu=0.1, \nu=1$, and $\nu=10$.
During a period in which $N_\text{T}=10^6$ updates are generated and served, approximately $K=2\times10^5, K=2\times10^6$ and $K=2\times10^7$ decisions are made.
    As shown Fig. \ref{fig:aud_mu_mc}, the Monte Carlo results do not change with decision rate $\nu$ and coincide with the corresponding theory results (see \eqref{rt:thm_aud}).

\section{Conclusion}\label{sec:5_conclusion}
In this paper, we have proposed a new measure termed \textit{age upon decisions} to evaluate the freshness of updates at decision epochs.
    For an $M/M/1$ update-and-decide system with Poisson decision process, we proved that the average AuD of the system is independent the rate of the decision process.
Thus, making decisions more frequently does not improve the timeliness of these decisions.
    Moreover, the proposed AuD measure has many practical applications.
In a cognitive communication system, for example, the secondary user can choose its sensing rate based on its demand on channel uses and its sensing cost, without extra consideration on the timeliness of the accessing time.
    If the decision rate is set to be very small, however, the receiver may miss a lot of received updates.
Therefore, characterizing system performance jointly with AuD and update missing probability would be an interesting extension of this work.

\appendix

\renewcommand{\theequation}{\thesection.\arabic{equation}}
\newcounter{mytempthcnt}
\setcounter{mytempthcnt}{\value{theorem}}
\setcounter{theorem}{2}
\addcontentsline{toc}{section}{Appendices}\markboth{APPENDICES}{}

\subsection{Proof of Proposition \ref{prop:fx_Tk}} \label{apx:f_Tk}
\begin{proof}
Suppose that at the arrival of update $k$, the number of updates in the queue is $L(t)=i$.
    It is clear that update $k$ will not get served until all waiting updates are completed, i.e., $W_k=\sum_{j=2}^{i} S_{(j)} +S_{(1)}^\text{re}$, where $S_{(1)}^\text{re}$ is the remaining service time of current update.
Since $S_{(1)}$ follows the memoryless  exponential distribution, we know that $S_{(1)}^\text{re}$ has the same distributions as $S_{(1)}$.
    Thus, system $T_k$ can be rewritten as $T_k=W_k+S_k=\sum_{j=1}^{i+1} S_{(j)}$, which follows Erlang distribution $\text{Erlang}(i+1, \mu)$.
The probability that system time $T_k$ is larger than $x$ can expressed as
    \begin{eqnarray*}
        \Pr\{T_k>x\} \hspace{-2.75mm} & = &\hspace{-2.75mm} \sum_{i=0}^\infty \Pr\left\{\sum_{j=1}^{i+1} S_{(j)}>x \right\} \Pr\{L(t)=i\} \\
                    \hspace{-2.75mm} & = &\hspace{-2.75mm} \sum_{i=0}^\infty \sum_{j=0}^{i} \tfrac{1}{j!}e^{-\mu x}{(\mu x)}^j \cdot (1-\rho) \rho^i \\
                    \hspace{-2.75mm} & = &\hspace{-2.75mm} \sum_{j=0}^\infty \sum_{i=j}^{\infty} \tfrac{1}{j!}e^{-\mu x}{(\mu x)}^j \cdot (1-\rho) \rho^i \\
                    \hspace{-2.75mm} & = &\hspace{-2.75mm} e^{-\mu(1-\rho)x}, \quad x\geq0.
    \end{eqnarray*}

    The \textit{p.d.f.} of $T_k$ can then be readily obtained by taking the derivative of $1-\Pr\{T_k>x\} $, which completes the proof of \textit{Proposition} \ref{prop:fx_Tk}.
\end{proof}

\subsection{Proof of Proposition \ref{prop:mgf_Yk}} \label{apx:mgf_Yk}
\begin{proof}
As  discussed in \textit{Example} \ref{eg_1},  inter-departure time $Y_k$ is given by
\begin{equation}\label{dr:apx_mgf_Yk}
    Y_k =  \left\{
                            \begin{aligned}
                                    &S_k,    &\text{if}~X_k<T_{k-1}\\
                                    &X_k+S_k-T_{k-1},   &\text{if}~X_k>T_{k-1}.
                            \end{aligned}
                \right.,
\end{equation}
and thus we have,
\begin{eqnarray*}
G_\text{Y}(s) \hspace{-2.75mm}&=&\hspace{-2.75mm} \Pr\{X_k>T_{k-1}\} \mathbb{E}[e^{sY_k}]  + \Pr\{X_k<T_{k-1}\} \mathbb{E}[e^{sY_k}]  \\
                                   \hspace{-2.75mm}&=&\hspace{-2.75mm} \Pr\{X_k>T_{k-1}\} \mathbb{E}[e^{s(X_k-T_{k-1})}] \mathbb{E}[e^{sS_k}]  \\
                                   \hspace{-2.75mm}&&\hspace{-2.75mm}     +  \Pr\{X_k<T_{k-1}\} \mathbb{E}[e^{sS_k}],
\end{eqnarray*}
where the second equation follows \eqref{dr:apx_mgf_Yk} and the fact that $S_k$ is independent with  $X_k$ and $T_{k-1}$.

First, the probability that inter-arrival time $X_k$ is smaller than  previous system time $T_{k-1}$ can be obtained as follows.
\begin{equation*}
\Pr\{X_k<T_{k-1}\}  = \int_0^\infty f_\text{X}(x) \text{d}x \int_x^\infty f_\text{T}(t)\text{d} t  = \rho,
\end{equation*}
where $f_\text{X}(x)=\lambda e^{-\lambda x}$ is the \textit{p.d.f.} of $X_k$ and $ f_\text{T}(t)$ is given by \textit{Proposition} \ref{prop:fx_Tk}.
    We also have $\Pr\{X_k>T_{k-1}\}  =1-\rho$.

Second, the MGF of $Y_k$ conditioned on $X_k>T_{k-1}$ is given by
\begin{eqnarray*}
\hspace{-2.75mm}&&\hspace{-2.75mm} \mathbb{E}[e^{sY_k}|X_k>T_{k-1}] =\mathbb{E}[e^{s(X_k-T_{k-1})}|X_k>T_{k-1}] \mathbb{E}[e^{sS_k}]   \\
                                   \hspace{-2.75mm}&&\hspace{-2.75mm} =\int_0^\infty f_\text{T}(t)\text{d}t \int_t^\infty f_{X|X>t}(x)e^{s(x-t)} \text{d}x
                                                                                                                    \int_0^\infty f_\text{S}(x) e^{sx}\text{d}x \\
                                   \hspace{-2.75mm}&&\hspace{-2.75mm} = \frac{\lambda\mu}{(\lambda-s)(\mu-s)},
\end{eqnarray*}
where $f_\text{T}(t)$ is given by \text{Proposition} \ref{prop:fx_Tk}, $f_\text{S}(x) =\mu e^{-\mu x}$ is the \textit{p.d.f.} of service time $S_k$, and
\begin{equation*}
         f_{X|X>t}(x) = \frac{ f_{X}(x)}{\Pr\{X>t\}}=\lambda e^{-\lambda(x-t)}, \quad x>t.
\end{equation*}

Also note that
\begin{equation*}
         \mathbb{E}[e^{sY_k}|X_k<T_{k-1}] = \mathbb{E}[e^{sS_k}]  = \frac{\mu}{\mu-s}
\end{equation*}
Combining the obtained results, the proof of the proposition would be completed readily.

\end{proof}

\subsection{Proof of Theorem \ref{thm:thm_1_aud}}  \label{apx:prf_1_aud}
\begin{proof}
    Given an inter-departure time $Y_k=y$, suppose $N_k$ decisions are made at epochs $\{\tau_j, j=1,2,\cdots,N_k\}$.
        It is clear that $N_k$ is a Poisson distributed random number with parameter $\nu y$.
    That is, the probability that $n$ decisions are made during $Y_k$ is
\begin{equation*}
    \Pr\{N_k=n|Y_k=y\}=\tfrac{(\nu y)^n}{n!}e^{-\nu y}.
\end{equation*}

We denote $\tau'_j=\tau_j-t'_{k-1}$.
    Since decision epochs $\tau_j$ are independently and uniformly distributed in $Y_k$,  $\tau'_j$ would be independently and uniformly distributed over $[0,y]$.
Thus, the expected sum $\Delta'_{\text{D}k}=\sum_{j=1}^n \tau'_j$   can be  expressed as
\begin{equation*}
    \mathbb{E}\big[\Delta'_{\text{D}k}|Y_k=y,N_k=n\big] =\sum_{j=1}^n \mathbb{E}[\tau'_j]=\tfrac{ny}{2}.
\end{equation*}

Note that the AuD at decision epoch $\tau_j$ is $\Delta_{\text{D}k}(\tau_j)=T_{k-1}+\tau'_j$, where $T_{k-1}$ is the system time of the latest received update.
    We then have
\begin{eqnarray*}
    \mathbb{E}[\Delta_{\text{D}k}|Y_k=y]
    \hspace{-2.75mm}&=&\hspace{-2.75mm}  \sum_{n=0}^\infty \Pr\{N_k=n|Y_k=y\} (\tfrac{ny}{2}  + n\mathbb{E}[T_{k-1}]) \\
    \hspace{-2.75mm}&=&\hspace{-2.75mm}  \nu  (\tfrac{y^2}{2} + y\mathbb{E}[T_{k-1}]).
\end{eqnarray*}

Taking the expectation over $Y_k$, we have
\begin{eqnarray*}
    \mathbb{E}[\Delta_{\text{D}k}] =\tfrac{\nu}{2}\mathbb{E}[Y_k^2] + \nu\mathbb{E}[T_{k-1}Y_k],
\end{eqnarray*}
    where $\mathbb{E}[T_{k-1}Y_k]$ is given by
\begin{eqnarray*}
    \mathbb{E}[T_{k-1}Y_k]\hspace{-2.75mm}&=&\hspace{-2.75mm} \mathbb{E}[T_{k-1}Y_k|T_{k-1}>X_k]\Pr\{T_{k-1}>X_k\} \\
                                    \hspace{-2.75mm}&&\hspace{-2.75mm} + \mathbb{E}[T_{k-1}Y_k|T_{k-1}\leq X_k]\Pr\{T_{k-1}\leq X_k\} \\
                                    \hspace{-2.75mm}&=&\hspace{-2.75mm}  (1-\rho)\mathbb{E}[T_{k-1}X_k-T_{k-1}^2|T_{k-1}\leq X_k] \\
                                    \hspace{-2.75mm}&=&\hspace{-2.75mm} +\mathbb{E}[T_{k-1}]\mathbb{E}[S_{k}].
\end{eqnarray*}

Based on \textit{Proposition} \ref{prop:fx_Tk}, we further have
\begin{eqnarray*}
    \hspace{-6mm}&&\mathbb{E}[T_{k-1}X_k-T_{k-1}^2|T_{k-1}\leq X_k]\\
    \hspace{-6mm}&& \frac{1}{1-\rho} \int_{0}^\infty f_\text{X} (x) \text{d}x \int_0^x (xt-t^2) f_\text{T}(x)\text{d} t\\
    \hspace{-6mm}&& \frac{1}{1-\rho} \frac{1-\rho}{\mu^2\rho},
\end{eqnarray*}
and hence
\begin{equation}
    \mathbb{E}[T_{k-1}Y_k] = \frac{1}{\mu^2(1-\rho)}+\frac{1-\rho}{\mu^2\rho}.
\end{equation}

Assume that there are $K$  departure epochs and $N_\text{T}$ decision epochs during a period $T$, we have $N_\text{T}=\sum_{k=1}^K N_k$.
    As $T$ goes to infinity, we have
\begin{eqnarray}
        \nonumber \widebar{\Delta}_\text{D} \hspace{-2.75mm}&=&\hspace{-2.75mm}
                                \lim_{T\rightarrow\infty} \tfrac{1}{N_\text{T}}\sum_{k=1}^K \Delta_{\text{D}k}
                                = \lim_{T\rightarrow\infty} \tfrac {K}{N_\text{T}} \tfrac1K\sum_{k=1}^K \Delta_{\text{D}k}
                                = \tfrac{\mathbb{E}[\Delta_{\text{D}k}] }{\nu\mathbb{E}[Y_k]} \\
       \label{apx:rt_aud}  \hspace{-2.75mm}&=&\hspace{-2.75mm} \frac{\mathbb{E}[Y_k^2]+2\mathbb{E}[T_{k-1}Y_k]}{2\mathbb{E}[Y_k]}.
\end{eqnarray}

From \textit{Proposition} \ref{prop:mgf_Yk} and \textit{Proposition} \ref{prop:fx_Tk}, we know that $\mathbb{E}[Y_k]=\frac1\lambda$ and $\mathbb{E}[Y_k^2]=\frac{2}{\lambda^2}$.
    Inserting $\mathbb{E}[Y_k]$,  $\mathbb{E}[Y_k]$,   $\mathbb{E}[Y_k]$, and $\mathbb{E}[T_{k-1}Y_k]$ into \eqref{apx:rt_aud},  the proof of \textit{Theorem} \ref{thm:thm_1_aud} would be completed.
\end{proof}


\small{
\bibliographystyle{IEEEtran}

}
\end{document}